
\documentclass[letterpaper, 10 pt, conference]{ieeeconf}  

\IEEEoverridecommandlockouts                              

\overrideIEEEmargins                                      

\usepackage{bm}
\usepackage{nicefrac}
\usepackage{amssymb}
\usepackage{amsmath}

\usepackage{amsthm}
\usepackage{cite}
\usepackage{xcolor}
\usepackage{graphicx}
\usepackage{url}
\usepackage{svg}
\usepackage{balance}
\newtheoremstyle{exampstyle}
  {3pt} 
  {3pt} 
  {\itshape} 
  {} 
  {\bfseries} 
  {.} 
  {.5em} 
  {} 

\theoremstyle{exampstyle} 
\newtheorem{definition}{Definition}

\newtheorem{theorem}{Theorem}
\newtheorem{remark}{Remark}
\newtheorem{assumption}{Assumption}
\newtheorem{proposition}{Proposition}

\newtheorem{example}{Example}
\newtheorem{corollary}{Corollary}

\theoremstyle{plain}

\definecolor{mumred}{RGB}{222,33,77}
\definecolor{mumgreen}{RGB}{0, 140, 0}
\definecolor{mumblue}{RGB}{0, 100, 222}
\definecolor{mumpurple}{RGB}{128, 0, 128}

\newcommand{\ofx}{\left(\bm{x}\right)}

\DeclareMathOperator*{\argmin}{arg\,min}
\newcommand{\xb}{\bm{x}}
\newcommand{\ub}{\bm{u}}

\newcommand{\ddt}{\frac{\mathrm{d}}{\mathrm{d}t}}

\title{\LARGE \bf
	Finding Control Invariant Sets via Lipschitz Constants of Linear Programs
}

\author{Matti Vahs, Shaohang Han and Jana Tumova
	\thanks{This work was partially supported by the Wallenberg AI, Autonomous
		Systems and Software Program (WASP) funded by the Knut and Alice
		Wallenberg Foundation. This research has been carried out as part of the Vinnova Competence Center for Trustworthy Edge Computing Systems and Applications at KTH Royal Institute of Technology.}
	\thanks{The authors are with the Division of Robotics, Perception and Learning, School of Electrical Engineering and Computer Science , KTH Royal Institute of Technology, Stockholm, Sweden and also affiliated with Digital Futures. Mail addresses: {\{\tt\small vahs, shaohang, tumova\}}
		{\tt\small @kth.se}}%
}

\begin{document}
	\maketitle
	\thispagestyle{empty}
	\pagestyle{empty}

	\begin{abstract}
        Control invariant sets play an important role in safety-critical control and find broad application in numerous fields such as obstacle avoidance for mobile robots. However, finding valid control invariant sets of dynamical systems under input limitations is notoriously difficult.
        We present an approach to safely expand an initial set while always guaranteeing that the set is control invariant. Specifically, we define an expansion law for the boundary of a set and check for control invariance using Linear Programs (LPs). To verify control invariance on a continuous domain, we leverage recently proposed Lipschitz constants of LPs to transform the problem of continuous verification into a finite number of LPs. Using concepts from differentiable optimization, we derive the safe expansion law of the control invariant set and show how it can be interpreted as a second invariance problem in the space of possible boundaries. Finally, we show how the obtained set can be used to obtain a minimally invasive safety filter in a Control Barrier Function (CBF) framework. Our work is supported by theoretical results as well as numerical examples.
	\end{abstract}

    \section{INTRODUCTION}
    One of the key concepts in the study of dynamical systems is \emph{forward invariance}, which refers to the property that, if a set contains the system state at some time, it will also contain it in the future \cite{blanchini1999set}. Invariant sets have been studied for decades in the context of Lyapunov stability, constrained control or robustness analysis \cite{blanchini2008set}. For controlled dynamical systems such as, e.g. robots, we say a set is \emph{control invariant} or \emph{viable}, if there exists a feasible control input that renders a set forward invariant.

    Within the field of safety-critical control, numerous methods have been proposed to guarantee control invariance such as Hamilton-Jacobi (HJ) reachability analysis \cite{bansal2017hamilton}, model predictive control \cite{wabersich2021predictive} or Control Barrier Functions (CBFs) \cite{ames2016control}, to name a few. HJ reachability finds the largest control invariant set under input constraints and possible adversaries through dynamic programming. CBFs define a control invariant set as the super level set of a continuously differentiable function that ensures a positive flow into the set for all boundary states. Despite fast progress in the field, finding valid CBFs under actuation constraints is notoriously difficult and still remains an open problem. 

    Early approaches to finding a valid CBF rely on sum-of-squares (SOS) programming \cite{ahmadi2016some, clark2021verification, wang2023safety, xu2017correctness} which can be readily used when the system dynamics are represented as polynomials. A recent study \cite{choi2021robust} has explored the relationship between HJ reachability and CBFs, enabling the synthesis of CBFs for finite horizon problems directly from HJ reachability calculations. Similar to standard HJ reachability problems, the authors discretize the state space and solve the HJ Isaacs equation via dynamic programming.

    Backup CBFs\cite{gurriet2018online, chen2021backup}, which are closely related to model predictive safety filters \cite{wabersich2022predictive} , define an implicit control invariant set by guaranteeing that there always exists a backup trajectory that will lead to a smaller control invariant set. However, since the maximum control invariant set is only defined implicitly, its exact size remains unknown before deployment. Furthermore, the size of the implicit control invariant set heavily depends on the backup strategy which typically includes domain specific knowledge.
    \begin{figure}[t!]
        \centering
        \includegraphics[scale=0.46]{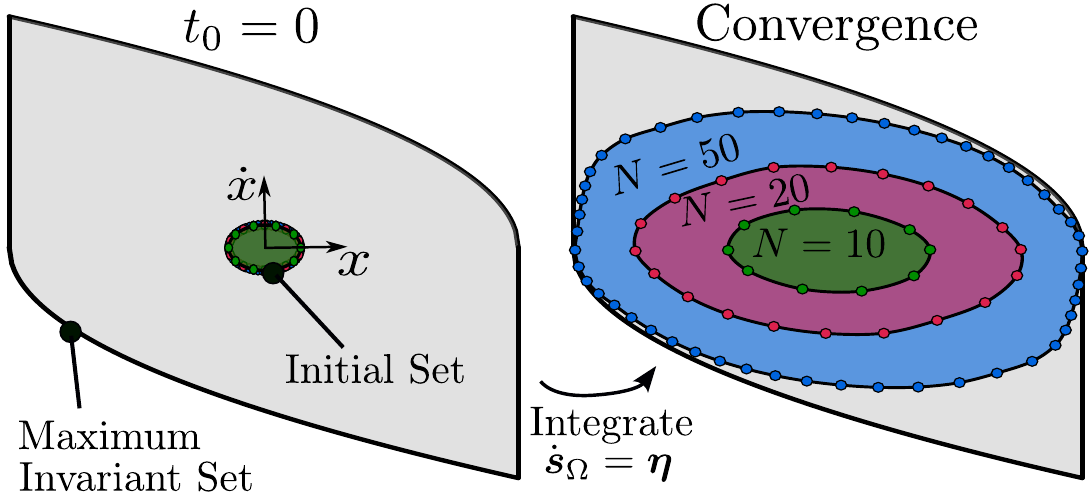}
        \caption{Illustration of the proposed invariant set expansion applied to the double integrator. Starting from a small invariant set, we apply a virtual expansion control input to each control point and ensure that there exists a control input keeping the state inside the set for all states on the boundary. Here, we vary the number of control points on the boundary to illustrate the conservatism.}
        \label{fig:FirstPage}
    \end{figure}

    Alternatively to finding exact control invariant sets, many recent works leverage neural networks (NNs) to approximate CBFs which are often referred to as neural CBFs \cite{so2024train, dawson2022safe}. Although neural networks can represent arbitrary complex functions, it is important to stress that neural CBFs need to be verified since NNs are trained on a finite set of data points. There are a variety of verification tools such as branch and bound schemes \cite{bunel2020branch} or Satisfiability Modulo Theory (SMT) \cite{peruffo2021automated}. In \cite{wang2024simultaneous}, a verification-in-the-loop scheme is proposed to simultaneously synthesize and verify a neural CBF.

    Closest related to our work are approximated regions of attraction (RoA) using flow CBFs \cite{ubellacker2024approximating}. Given a small RoA, the authors expand a polygon while ensuring that a backup CBF exists for all vertices of the polygon that drives the state back to the smaller RoA. The authors show that they approach the exact RoA as the number of vertices approaches infinity. However, for any finite number, it can happen that a state on the boundary leaves the set as the authors only verify the backup strategy at discrete vertex positions.
    In this work, we show how to define CBFs in the space of possible boundaries which allow for safe expansion of control invariant sets. Similar ideas have been proposed in our previous work, where CBFs have been applied to alternative domains to robot state spaces such as belief spaces \cite{10310096, vahs2024risk} or trajectory spaces \cite{vahs2024forward}.

    Specifically, we present an approach for finding valid control invariant sets on a continuous domain. Instead of using the zero level set of a continuously differentiable function, we directly parameterize the boundary of an invariant set and ensure that for all points on the boundary, there exists a positive flow into the set. This greatly simplifies the process of checking if a set is invariant since we can check the flow condition at discrete points on the boundary through a Linear Program (LP). To guarantee the flow condition on a continuous domain, i.e. all states on the boundary, we leverage recently proposed Lipschitz constants of LPs \cite{canovas2022projection}. Finally, we propose a safe expansion law that increases the size of a set while ensuring that the set is always control invariant, see Fig.~\ref{fig:FirstPage}. 



 



        
    \section{Preliminaries}
    \subsection{Control Barrier Functions}
    Consider a dynamical system in control affine form
\begin{align}
    \dot{\bm{x}} = \bm{F}(\xb, \ub) =\bm{f}\left(\bm{x}\right) + \bm{g}\left(\bm{x}\right) \bm{u}\label{eq:deterministic},
\end{align}
with state $\bm{x} \in \mathcal{X} \subseteq \mathbb{R}^n$ and control input $\bm{u} \in \mathcal{U} \subseteq \mathbb{R}^m$. 
A safe set $\mathcal{C}$ is constructed as the superlevel set of a continuously differentiable function $h: \mathcal{X} \rightarrow \mathbb{R}$ such that
\begin{equation}
\begin{aligned}
  \label{eq:safeset_state}
    \mathcal{C} &= \left\{\bm{x} \in \mathcal{X} \mid h\left(\bm{x}\right) \geq 0\right\},\\
    \partial \mathcal{C} &= \left\{\bm{x} \in \mathcal{X} \mid h\left(\bm{x}\right) = 0\right\}.
 \end{aligned}
\end{equation} 
\begin{definition}
A set $\mathcal{C}$ is control invariant with respect to the system \eqref{eq:deterministic} if for every initial condition $\bm{x}(t_0) \in \mathcal{C}$
there exists a control input $\ub(t) \in \mathcal{U}$ such  that $\bm{x}(t) \in \mathcal{C}, \forall t \geq t_0$.
\end{definition}
A prominent approach to render a safe set control invariant is to use CBFs.
\begin{definition}
\label{def:CBF}
Given a set $\mathcal{C}$, defined by Eq. \eqref{eq:safeset_state}, $h$ serves as a zeroing CBF for the system \eqref{eq:deterministic} if  $\forall \bm{x}$ satisfying $h\left(\bm{x}\right) \geq 0, \exists \bm{u} \in \mathcal{U}$ such that
\begin{align}
    \frac{\partial h}{\partial \bm{x}} \left(\bm{f}\left(\bm{x}\right) + \bm{g}\left(\bm{x}\right) \bm{u}\right) \geq - \alpha(h\left(\bm{x}\right)).\label{eq:cbf}
\end{align}
where $\alpha$ is a class-$\kappa$ function, i.e. continuous and strictly increasing with $\alpha(0)=0$.
\end{definition}
In case a valid CBF exists, it follows that a control input $\ub$ satisfying Eq.~\eqref{eq:cbf} renders $\mathcal{C}$ control invariant \cite{ames2016control}.

\subsection{Box Constrained Linear Programs}
\label{sec:LP}
Consider a box constrained linear program (LP) of the form
\begin{equation}
    \begin{aligned}
        \vartheta = \underset{\bm{w} \in \mathbb{R}^{\ell}}{\mathrm{min}}  \quad & \bm{c}^T \bm{w}  \\
        \textrm{s.t.~~} \quad & w_{i, \mathrm{min}} \leq w_i \leq w_{i, \mathrm{max}},~ i=1,\dots, \ell
    \end{aligned}
    \label{eq:LPGeneral}
\end{equation}
where $\bm{w}$ is the decision variable, $\bm{c} \in \mathbb{R}^{\ell}$ is the cost vector and $w_{i, \mathrm{min}}, w_{i, \mathrm{max}}$ are the $i$th minimum and maximum values $\bm{w}$, respectively.  We will use the shorthand notation $\vartheta(\bm{c})$ or just $\vartheta$ to refer to the solution of an LP parameterized by $\bm{c}$. For this specific case of an LP, we can obtain the closed form solution
\begin{align}
    w_i^* = \begin{cases}
       w_{i, \mathrm{min}} & \mathrm{if}~c_i > 0\\
       w_{i, \mathrm{max}} & \mathrm{if}~c_i < 0\\
       \in [w_{i, \mathrm{min}}, w_{i, \mathrm{max}}] & \mathrm{if}~c_i = 0
    \end{cases}\label{eq:closedform}
\end{align}
since the optimal solution of an LP always lies on the boundary of the feasible set \cite{dantzig1955generalized}. In this work, we consider $\bm{c}$-perturbations of the LP~\eqref{eq:LPGeneral}, meaning that we are interested in the sensitivity of the optimal value $\vartheta$ with respect to changes in $\bm{c}$. From the closed form expression in Eq.~\eqref{eq:closedform}, we conclude that for any $c_i \neq 0$
\begin{align*}
    \frac{\partial \vartheta}{\partial c_i} = \frac{\partial}{\partial c_i} \bm{c}^T \bm{w}^* = w_i^*.
\end{align*}

It has been shown in \cite{gisbert2019lipschitz} that the optimal value $\vartheta$ is continuous with respect to changes in $\bm{c}$. Moreover, global Lipschitz arguments can be found that bound the change in the optimal value, i.e.
\begin{align}
    \mathcal{L}_{LP} = \underset{\bm{c}, \tilde{\bm{c}}}{\mathrm{sup}} \frac{|\vartheta(\bm{c}) - \vartheta(\tilde{\bm{c}})|}{\lVert \bm{c} - \tilde{\bm{c}}\rVert}\label{eq:LipschitzDef}.
\end{align}
For the following theorem, we assume the inequality constraints in Eq.~\eqref{eq:LPGeneral} to be given in the form $\bm{A} \bm{w} \leq \bm{b}.$ Let us introduce the set
\begin{align}
    \mathcal{F}(\bm{A}, \bm{b}) := \left\{\bm{w} \in \mathbb{R}^{\ell}:\bm{A} \bm{w} \leq \bm{b}\right\}\label{eq:feasibleset},
\end{align}
which describes the feasible set of an inequality constrained LP.
\begin{theorem}{[Thm. 4 in \cite{canovas2022projection}]}
    Consider an LP parameterized by $\bm{A}, \bm{b}$ and $\bm{c}$, and a Lipschitz constant as defined in Eq.~\eqref{eq:LipschitzDef}, then
    \begin{align*}
        \mathcal{L}_{LP} \leq \underset{\bm{w} \in \mathcal{E}}{\mathrm{max}} \lVert \bm{w} \rVert_2,
    \end{align*}\label{thm:Lipschitz}
    where
    \begin{align*}
        \mathcal{E} = \mathrm{extr}\left(\mathcal{F}(\bm{A}, \bm{b}) \cap \mathrm{span}\left\{\bm{a}_i, i=1,\dots,2\ell\right\}\right)
    \end{align*}
    with $\mathcal{F}$ as defined in Eq.~\eqref{eq:feasibleset} and $\bm{a}_i$ being the $i$th row of $\bm{A}$.
\end{theorem}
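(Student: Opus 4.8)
The plan is to exploit the well-known sensitivity structure of the optimal value of an LP with respect to its cost vector. Writing $\vartheta(\bm{c}) = \min_{\bm{w} \in \mathcal{F}(\bm{A},\bm{b})} \bm{c}^T\bm{w}$, I would first observe that $\vartheta$ is an infimum of linear functions of $\bm{c}$ and is therefore \emph{concave} and \emph{piecewise linear} on the (convex, conical) set of cost vectors for which the LP is bounded below. On each region of linearity the optimizer $\bm{w}^*$ is a fixed extreme point of the feasible polyhedron, so $\vartheta(\bm{c}) = \bm{c}^T \bm{w}^*$ is affine there.

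The second step is to compute the slope of each affine piece. By the envelope theorem (equivalently, Danskin's theorem applied to $f(\bm{c},\bm{w})=\bm{c}^T\bm{w}$), at any point where the minimizer is unique the gradient is exactly $\nabla_{\bm{c}}\vartheta = \bm{w}^*(\bm{c})$; this matches the coordinatewise computation $\partial\vartheta/\partial c_i = w_i^*$ already noted in the text. Since $\vartheta$ is concave and piecewise linear, its Lipschitz constant equals the largest slope norm, i.e.
\begin{align*}
\mathcal{L}_{LP} = \sup_{\bm{c}} \lVert \nabla_{\bm{c}}\vartheta(\bm{c})\rVert_2 = \sup\left\{\lVert \bm{w}^*\rVert_2 : \bm{w}^* \text{ optimal for some } \bm{c}\right\}.
\end{align*}
The one-sided inequality that is actually needed follows from the mean value theorem applied along the segment joining any $\bm{c}$ and $\tilde{\bm{c}}$.

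The third step is to identify the candidate optimizers with the finite set $\mathcal{E}$. Whenever the LP is bounded, its optimum is attained at an extreme point, so every optimizer contributing to the supremum is an extreme point of the feasible region. The role of the intersection with $\mathrm{span}\{\bm{a}_i\}$ is to handle a possibly unbounded or non-pointed $\mathcal{F}$: its lineality space is exactly $\ker \bm{A} = (\mathrm{span}\{\bm{a}_i\})^\perp$, and boundedness of $\vartheta$ forces $\bm{c}$ to be orthogonal to this lineality space, so an optimizer may always be chosen inside $\mathrm{span}\{\bm{a}_i\}$. Restricting $\mathcal{F}$ to this subspace removes the free directions and yields a pointed polyhedron whose extreme points are precisely $\mathcal{E}$. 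Taking the maximum of $\lVert\bm{w}\rVert_2$ over this finite set then bounds the supremum above and gives the claim.

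I expect the main obstacle to be the careful treatment of degeneracy and unboundedness in the third step: making rigorous that the supremum over optimizers is attained within the extreme points of $\mathcal{F}\cap\mathrm{span}\{\bm{a}_i\}$ rather than over all of $\mathcal{F}$, and that non-differentiability of $\vartheta$ on the measure-zero set of cost vectors with multiple optimizers does not inflate the Lipschitz constant beyond the maximal slope. The concavity of $\vartheta$ is what makes both points manageable, since it guarantees that the subdifferential is the convex hull of the optimal vertices and that its norm is therefore maximized at a vertex.
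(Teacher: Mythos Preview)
The paper does not supply a proof of this theorem: it is quoted verbatim as Theorem~4 of \cite{canovas2022projection} and used as a black box, with the only in-paper argument appearing afterwards in the proof of Proposition~\ref{lem:Lipschitz}, which merely specializes the statement to box constraints. There is therefore no proof in the paper to compare your proposal against.

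That said, your sketch is a sound outline of how such a result is typically established. The identification of $\vartheta$ as a concave, piecewise-linear function of $\bm{c}$ with subgradients equal to optimizers is correct, and your treatment of the lineality space is the right mechanism for explaining the intersection with $\mathrm{span}\{\bm{a}_i\}$: boundedness of the LP forces $\bm{c}\perp\ker\bm{A}$, and projecting any optimizer onto $(\ker\bm{A})^\perp=\mathrm{span}\{\bm{a}_i\}$ yields another optimizer of no larger norm lying in a pointed polyhedron, whose extreme points are exactly $\mathcal{E}$. One point to tighten is the passage from ``Lipschitz constant equals the supremum of $\lVert\bm{w}^*\rVert$ over all optimizers'' to the finite maximum over $\mathcal{E}$: you should argue that each linear piece of $\vartheta$ corresponds to a single extreme-point optimizer, so the relevant slopes already live in $\mathcal{E}$ and you never need the full optimizer set (which can be a face of positive dimension). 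With that clarification the argument goes through; but since the paper simply cites the result, no such proof is expected here.
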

    Theorem~\ref{thm:Lipschitz} can be further simplified by considering the structure of the box constraints in \eqref{eq:LPGeneral} which we summarize in the following proposition.
    \begin{proposition}
        Consider a box constrained LP as defined in Eq.~\eqref{eq:LPGeneral}, then
    \begin{align*}
        \mathcal{L}_{LP} \leq \Bigg\lVert \begin{bmatrix}
            \mathrm{max}(|w_{1, \mathrm{min}}|, |w_{1, \mathrm{max}}|)\\
            \vdots\\
            \mathrm{max}(|w_{\ell, \mathrm{min}}|, |w_{\ell, \mathrm{max}}|)
        \end{bmatrix} \Bigg\rVert_2
    \end{align*}
    denotes the global Lipschitz constant of \eqref{eq:LPGeneral} under $\bm{c}$-perturbations.\label{lem:Lipschitz}
    \end{proposition}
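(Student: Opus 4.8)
The plan is to specialize Theorem~\ref{thm:Lipschitz} to the box constraints of the LP~\eqref{eq:LPGeneral}. First I would rewrite the $2\ell$ scalar bounds $w_{i,\mathrm{min}} \leq w_i \leq w_{i,\mathrm{max}}$ in the matrix form $\bm{A}\bm{w} \leq \bm{b}$ required by the theorem. Each coordinate contributes two rows, namely $\bm{e}_i^T \bm{w} \leq w_{i,\mathrm{max}}$ and $-\bm{e}_i^T \bm{w} \leq -w_{i,\mathrm{min}}$, so the rows of $\bm{A}$ are exactly $\pm \bm{e}_i$ for $i = 1,\dots,\ell$, where $\bm{e}_i$ denotes the $i$th standard basis vector. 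This identifies the $2\ell$ vectors $\bm{a}_i$ appearing in the theorem.

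The key simplification comes from the span in the definition of $\mathcal{E}$. Since the rows $\{\pm \bm{e}_i\}$ already contain a basis of $\mathbb{R}^{\ell}$, their span is all of $\mathbb{R}^{\ell}$. Consequently, the intersection $\mathcal{F}(\bm{A},\bm{b}) \cap \mathrm{span}\{\bm{a}_i, i=1,\dots,2\ell\}$ reduces to $\mathcal{F}(\bm{A},\bm{b})$ itself, which by \eqref{eq:feasibleset} is the axis-aligned box $\prod_{i=1}^{\ell} [w_{i,\mathrm{min}}, w_{i,\mathrm{max}}]$. Its set of extreme points $\mathcal{E}$ is therefore the finite collection of $2^{\ell}$ corners, i.e. all vectors $\bm{w}$ with $w_i \in \{w_{i,\mathrm{min}}, w_{i,\mathrm{max}}\}$ for every $i$.

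It remains to evaluate $\max_{\bm{w} \in \mathcal{E}} \lVert \bm{w} \rVert_2$. Because $\lVert \bm{w} \rVert_2^2 = \sum_{i=1}^{\ell} w_i^2$ is separable across coordinates and each $w_i$ may be chosen independently from $\{w_{i,\mathrm{min}}, w_{i,\mathrm{max}}\}$, the maximum is attained by selecting, in each coordinate, the bound of larger magnitude, giving $w_i^2 = \mathrm{max}(|w_{i,\mathrm{min}}|, |w_{i,\mathrm{max}}|)^2$. Substituting this back into the norm and applying Theorem~\ref{thm:Lipschitz} yields the claimed bound. I expect no serious obstacle here, since the result is essentially a direct corollary; the only points needing care are justifying that the span is the whole space (so that the intersection is trivial) and that the extreme points of a box are precisely its $2^{\ell}$ corners, after which the coordinate-wise maximization is immediate.
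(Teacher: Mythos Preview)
Your argument is correct and follows essentially the same route as the paper: rewrite the box constraints as $\bm{A}\bm{w}\leq\bm{b}$ with rows $\pm\bm{e}_i$, observe that $\mathrm{span}\{\bm{a}_i\}=\mathbb{R}^{\ell}$ so that $\mathcal{E}$ is just the set of extreme points of the box, and then maximize the Euclidean norm over those corners. If anything, you spell out the final coordinate-wise maximization more explicitly than the paper does.
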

    \begin{proof}
    We observe that the constraints of Eq.~\eqref{eq:LPGeneral} can be written as
    \begin{align*}
        \bm{A} = \begin{bmatrix}
                \bm{I}_{\ell}\\
                -\bm{I}_{\ell}
            \end{bmatrix} \in \mathbb{R}^{2\ell \times \ell}, \hspace{0.5cm} \bm{b} = \begin{bmatrix}
                \bm{w}_{\mathrm{min}}\\
                -\bm{w}_{\mathrm{max}}
            \end{bmatrix} \in \mathbb{R}^{\ell},
    \end{align*}
    where $\bm{I}_{\ell}$ denotes the identity matrix of size $\ell$. Thus, it since $\mathrm{rank}(\bm{A}) = \ell$, it follows that 
    \begin{align*}
        \mathrm{span}\left\{\bm{a}_i, i=1,\dots,2\ell\right\} = \mathbb{R}^{\ell},
    \end{align*}
    which implies that the set $\mathcal{E}$ reduces to the extreme points of the feasible set. Solving for the maximum norm on the set $\mathcal{E}$ yields the Lipschitz constant in Lemma~\ref{lem:Lipschitz}.
    \end{proof}
    \subsection{Catmull-Rom Curves}
    \label{sec:catmullrom}
    Catmull-Rom curves are a family of piecewise cubic interpolating splines with smooth interpolation between control points $\bm{p}_i \in \mathbb{R}^n$. 
    A single curve segment between points $ \bm{p}_{i}$ and $ \bm{p}_{i+1} $ is defined by four consecutive control points: $ \bm{p}_{i-1} $, $ \bm{p}_i $, $ \bm{p}_{i+1} $, and $ \bm{p}_{i+2} $, where $i + k = (i + k \mod N)$ if $i+k \notin \{0, \dots, N-1\}$. An example with $N=4$ control points is shown in Fig.~\ref{fig:CatmullRom}. 
    
    
    A point $\bm{p}$ along the curve segment between $\bm{p}_i$ and $\bm{p}_{i+1}$ is described by a phasing variable $
    \tau_i \leq \tau \leq \tau_{i+1}$ such that
    \begin{align*}
        \bm{p}(\tau_i) &= \bm{p}_i, &&\bm{p}(\tau_{i+1}) = \bm{p}_{i+1}.
    \end{align*}
    In uniform parameterizations, the phasing variable is set to $1 \leq \tau \leq 2$ which, however, can lead to cusps and self-intersections \cite{yuksel2009parameterization}. To prevent this, \cite{yuksel2009parameterization} proposed non-uniform parameterizations in which the phasing variable is set to
    \begin{align*}
        \lVert\bm{p}_{i} - \bm{p}_{i-1}\rVert_2^{\beta} \leq \tau \leq \lVert\bm{p}_{i} - \bm{p}_{i-1}\rVert_2^{\beta} + \lVert\bm{p}_{i+1} - \bm{p}_{i}\rVert_2^{\beta} ,
    \end{align*}
    with $\beta=0.5$. The Catmull-Rom curve provides $C^1$ continuity and Lipschitz arguments can be found easily due to closed-form expressions of $\bm{p}(\tau)$. 
    Given a set of $N$ control points, multiple Catmull-Rom curve segments can be cascaded to obtain a closed curve as shown in Fig.~\ref{fig:CatmullRom}.
    
    Further, one can obtain bounding boxes for each curve segment with a height of $h = \tfrac{1}{4}\lVert\bm{p}_{i+1} - \bm{p}_i\rVert_2$ as shown in Fig.~\ref{fig:CatmullRom}. For a detailed description of Catmull-Rom curves, the reader is referred to \cite{yuksel2009parameterization}.
    \begin{figure}[t!]
        \centering
        \includegraphics[scale=2.4]{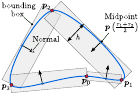}
        \caption{Illustration of a closed Catmull-Rom curve consisting of four segments. The control and mid points are shown by red and blue dots, respectively. The grey rectangles show the bounding box of each curve segment.}
        \label{fig:CatmullRom}
    \end{figure}
    \section{Problem Statement}
    We are concerned with finding a control invariant set for nonlinear control-affine systems as defined in Eq.~\eqref{eq:deterministic} with a feasible input set
    \begin{align}
        \mathcal{U} = \left\{\ub \in \mathbb{R}^m \mid \ub_{\mathrm{min}} \leq \ub \leq \ub_{\mathrm{max}}\right\}.\label{eq:FeasibleInputs}
    \end{align}
    We assume we are given a notion of a safe set $\mathcal{S} \subset \mathcal{X}$
    which contains all possible states that are considered safe, e.g. all collision free states of a robot. We seek to find the largest control invariant set $\Omega \subset \mathcal{X}$ that is fully contained in $\mathcal{S}$. 
    \begin{assumption}
        We represent the boundary of the control invariant set $\Omega$ as a set of control points, i.e. $\bm{x}_i \in \mathcal{X},~i = 1,\dots,N$ and assume the existence of a $C^1$ manifold connecting all discrete points.
    \end{assumption}
    One possible choice of such continuously differentiable manifold in two dimensions are the Catmull-Rom curves introduced in Sec.~\ref{sec:catmullrom}.
    Given $N$, we formalize the task as an optimization problem, reading
    \begin{equation}
        \begin{aligned}
            \underset{\Omega}{\mathrm{max}}  \quad & \mathcal{V}\left(\Omega\right)\\
            \textrm{s.t.~~} \quad &
             \Omega \subseteq \mathcal{S},\\
            ~ \quad &\exists \bm{u} \in \mathcal{U}: \bm{n}(\xb)^T \bm{F}(\xb, \ub) \geq 0,~ \forall \xb \in \partial \Omega,
        \end{aligned}
        \label{eq:Problem}
    \end{equation}
    where $\mathcal{V}(\cdot)$ denotes the volume and $\bm{n}(\xb)$ is the inwards pointing normal at state $\xb$ on the boundary $\partial \Omega$. 
    By enforcing that there exist a control input such that the dynamics flow has a normal component to the boundary, we ensure that $\Omega$ is a control invariant set.
    \begin{example}
        (Running Example) Consider a one-dimensional double integrator $\ddot{p} = u$ with state dynamics
        \begin{align*}
        \dot{\bm{x}} = \begin{bmatrix}
            \dot{p}\\
            \ddot{p}
        \end{bmatrix} = \begin{bmatrix}
            \dot{p}\\
            0
        \end{bmatrix}  + \begin{bmatrix}
            0\\
            1
        \end{bmatrix} u.
        \end{align*}
        The set of physical safe states is defined as $\mathcal{S} = \left\{\bm{x} \in \mathbb{R}^2 \mid |p| \leq 1\right\}$ and the set of feasible control inputs is given as $\mathcal{U} = \left\{u \in \mathbb{R}\mid -1\leq u \leq 1\right\}$. We seek to find the largest control invariant set $\Omega$ contained in $\mathcal{S}$.
    \end{example}

    \section{Safe Invariant Set Expansion}
    In the following, we present our solution to finding a control invariant set. Specifically, our solution approach follows three steps: 1) We model the boundary of the set $\Omega$ as a dynamical system and introduce its continuous time dynamics which we use to 2) find a safe expansion law enforcing control invariance at discrete control point positions and 3) use Lipschitz arguments of linear programs to enforce control invariance on a continuous domain.

    \subsection{Invariant Set Dynamics}
    Our considered sets $\Omega$ are parameterized by $N$ control points $\bm{x}_1,\dots,\bm{x}_N$ and a $C^1$ manifold connecting them where we do not have any requirements on convexity of the set. Thus, given the smooth connecting manifold, the boundary is uniquely described by the position of all control points which allows us to define the set state
    \begin{align*}
        \bm{s}_{\Omega} = \begin{bmatrix}
            \bm{x}_1 & \bm{x}_2 & \dots & \bm{x}_N
        \end{bmatrix}^T \in \mathbb{R}^{n \cdot N}.
    \end{align*}
    Next, we model the set state as a controlled dynamical system that evolves in continuous time. Specifically, we model the dynamics of $\bm{s}_{\Omega}$ as a single integrator
    \begin{align}
        \dot{\bm{s}}_{\Omega}(t) = \bm{\eta}(t)\label{eq:PolytopeSystem},
    \end{align}
    where $\bm{\eta} \in \mathbb{R}^{n\cdot N}$ is a virtual control input that directly affects the velocity of the control points $\bm{x}_i$. With the definition of this dynamical system of the set $\Omega$, one can reformulate the problem in Eq. \eqref{eq:Problem} as finding virtual control inputs $\bm{\eta}$ that increase the volume of the set while ensuring that it stays control invariant. Thus, we introduce a reference control input $\bm{\eta}^{\mathrm{ref}}$ that tries to maximize the volume $\mathcal{V}(\Omega)$ regardless of the constraints in Eq.~\eqref{eq:Problem}. Afterwards, we propose a CBF-based safety filter that corrects $\bm{\eta}^{\mathrm{ref}}$ minimally to ensure control invariance.

    For the reference control input, one possible choice is to directly use the gradient $\nicefrac{\partial \mathcal{V}(\Omega)}{\partial \bm{s}_{\Omega}}$ to maximize the invariant set's volume. However, a closed form expression of the volume is often only available for simple set representations. Generally, we can apply any hand-crafted reference control input that increases the volume of $\Omega$ over time.
    \begin{example}
        (Example Cont.) We parameterize the boundary of the control invariant set $\Omega$ by $N$ control points $\bm{x}_i \in \mathbb{R}^2$ which we stack in $\bm{s}_{\Omega}$ in counter clockwise order as it can be seen in 
        Fig.~\ref{fig:CatmullRom} for $N=4$. We use the Catmull-Rom curve as a smooth function $\bm{x}(\tau)$ connecting all control points, forming a closed curve. We adapt the expansion control law in \cite{ubellacker2024approximating} which is derived for 2D polygons. The reference control input for control point $\bm{x}_i$ is obtained through
    \begin{align*}
        \bm{\eta}^{\mathrm{ref}}_i = k_n \bm{q}_i + k_c (\bm{x}_c^i - \mathrm{proj}_{l_i}(\bm{x}_i)),
    \end{align*}
    for some controller gains $k_n, k_c \geq 0$. Figure~\ref{fig:ExpansionControl} illustrates the components of the expansion control inputs. The component in the direction of $\bm{q}_i$ tries to increase the distance along the normal direction while the other component tries to keep a balanced distribution of control points along the boundary by centering the control point on the connecting line between its neighbor points.
    \end{example}
    \begin{figure}[t!]
        \centering
        \includegraphics[scale=1.]{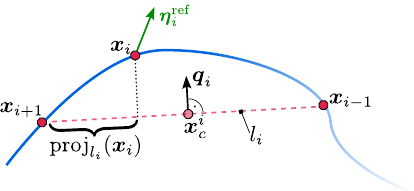}
        \caption{Illustration of the reference virtual control input $\bm{\eta}^{\mathrm{ref}}$ that balances expansion and distribution of control points along the boundary.}
        \label{fig:ExpansionControl}
        \vspace{-0.5cm}
    \end{figure}

    \subsection{Safe Expansion}
    Having defined a reference expansion controller, we seek to correct the virtual control input $\bm{\eta}^{\mathrm{ref}}$ in order to satisfy the constraints in Eq. \eqref{eq:Problem}. To that end, we define a CBF candidate for the dynamical system in Eq. \eqref{eq:PolytopeSystem} which we discuss in detail subsequently.

    By Nagumo's Theroem \cite{blanchini1999set}, the set $\Omega$ is a control invariant set if we can ensure 
    \begin{align}
        \exists \ub \in \mathcal{U}:\bm{n}(\xb)^T \bm{F}(\xb, \ub) \geq 0, \forall \xb \in \partial \Omega,
        \label{eq:existence}
    \end{align}
    which states that there exists a flow into the set for all states on the boundary. More importantly, we can transform this binary condition into a continuously valued function
    \begin{equation}
        \begin{aligned}
            b^*(\bar{\xb}) = \underset{\bm{u} \in \mathbb{R}^m}{\mathrm{sup}}  \quad & \bm{n}(\bar{\xb})^T \left(\bm{f}(\bar{\xb}) + \bm{g}(\bar{\xb}) \ub \right)  \\
            \textrm{s.t.~~} \quad & \ub \in \mathcal{U},
        \end{aligned}
        \label{eq:LPfeasibility}
    \end{equation}
    for one specific state $\bar{\xb} \in \partial \Omega$. Note that, if $\mathcal{U}$ is defined by linear constraints as in Eq. \eqref{eq:FeasibleInputs}, the problem can be formulated as a linear program (LP). Consequently, it holds that
    \begin{align*}
        b^*(\bar{\xb}) \geq 0 \Leftrightarrow \exists \ub \in \mathcal{U}:\bm{n}(\bar{\xb})^T \bm{F}(\bar{\xb}, \ub) \geq 0,
    \end{align*}
    implying that we can define a \emph{safe set in the space of possible sets $\Omega$} as the collection of zero levelsets of $b^*$, i.e.
    \begin{align*}
        \mathcal{C}_{\Omega} = \left\{\Omega \mid b^*(\xb) \geq 0, \forall \xb \in \partial \Omega\right\}.
    \end{align*}
    Thus, we can view the safe expansion of the control invariant set as yet another invariance problem for $\Omega$: \emph{if we can ensure that $\mathcal{C}_{\Omega}$ is control invariant with respect to the system~\eqref{eq:PolytopeSystem}, the set $\Omega$ is always control invariant with respect to the system~\eqref{eq:deterministic}.}

    Next, we present a CBF candidate function that is used to render $\mathcal{C}_{\Omega}$ control invariant. We ensure that the CBF condition
    \begin{align}
        \dot{b}^*(\bm{x}) \geq -\alpha\left(b^*(\xb)\right),~\forall \xb \in \partial \Omega\label{eq:contCBFConstraint},
    \end{align}
    holds for all states $\xb$ on the boundary. If these conditions hold at all times, it follows that $\mathcal{C}_{\Omega}$ is control invariant \cite{ames2016control}. To obtain the time derivative of the optimal value $b^*$, we can rewrite the continuous state on the boundary as a function of the set state $\bm{s}_{\Omega}$ and apply the chain rule
    \begin{align*}
        \dot{b}^*(\bm{x}) &= \ddt b^*\left(\bm{x}(\bm{s}_{\Omega})\right) = \frac{\partial b^*}{\partial \bm{s}_{\Omega}} \dot{\bm{s}}_{\Omega}\\
        \frac{\partial b^*}{\partial \bm{s}_{\Omega}} &= \frac{\partial}{\partial \bm{s}_{\Omega}} \left( \bm{n}^T \left(\bm{f}(\xb) + \bm{g}(\xb) \ub^* \right)\right)\\
        &=\frac{\partial \bm{n}^T}{\partial \bm{s}_{\Omega}} \left(\bm{f}(\xb) + \bm{g}(\xb) \ub^* \right) + \bm{n}^T \left(\frac{\partial \bm{f}}{\partial \bm{s}_{\Omega}} + \frac{\partial \bm{g}}{\partial \bm{s}_{\Omega}} \ub^* \right),
    \end{align*}
    where $\bm{u}^*$ is the optimal solution of the feasibility problem in Eq.~\eqref{eq:LPfeasibility}. Here, we use the fact that $\nicefrac{\partial \ub^*}{\partial \bm{s}_{\Omega}} = \bm{0}$ almost everywhere which follows directly from the discussion in Sec.~\ref{sec:LP} as Eq.~\eqref{eq:LPfeasibility} is a box constrained LP.
    
    Finally, we can obtain a safe virtual control input $\bm{\eta}$ that minimally deviates from the reference control law by solving
    \begin{equation}
        \begin{aligned}
            \bm{\eta}(\bm{s}_{\Omega}) = \argmin_{\bm{\eta} \in \mathbb{R}^{n\cdot N}} \quad & \left(\bm{\eta} - \bm{\eta}^{\mathrm{ref}}\right)^T {\bm{Q}} \left(\bm{\eta} - \bm{\eta}^{\mathrm{ref}}\right)\\
            \textrm{s.t.~~} \quad & \frac{\partial b^*}{\partial \bm{s}_{\Omega}} \bm{\eta} \geq - \alpha(b^*),~\forall \xb \in \partial \Omega.
        \end{aligned}
        \label{eq:QP}
    \end{equation}
    
    However, since the boundary of $\Omega$ is defined on a continuous domain, Eq.~\eqref{eq:contCBFConstraint} would result in infinitely many CBF constraints which prevents us from obtaining a safe expansion input $\bm{\eta}$ in practice. Therefore, we leverage Lipschitz arguments of LPs which we discuss in the following section.

    \subsection{From Infinite to Finite Constraints}
    When further inspecting the feasibility problem in Eq.~\eqref{eq:LPfeasibility} with a feasible control input set as defined in Eq.~\eqref{eq:FeasibleInputs}, we can rewrite the linear program in standard form of a box constrained LP
    \begin{equation}
        \begin{aligned}
            b^*(\xb) = \bm{n}(\xb)^T \bm{f}(\xb) - \underset{\bm{u} \in \mathbb{R}^m}{\mathrm{min}}  \quad & \bm{c}(\xb)^T \ub   \\
            \textrm{s.t.~~} \quad & \ub \in \mathcal{U},
        \end{aligned}
        \label{eq:LPrewritten}
    \end{equation}
    where $\bm{c}(\xb)^T = \bm{n}(\xb)^T \bm{g}(\xb)$. Hence, given Lipschitz constants $\mathcal{L}_{nf}$ of $\bm{n}^T \bm{f}$ and $\mathcal{L}_{ng}$ of $\bm{n}^T\bm{g}$, we can bound the maximum change of the optimal value of the feasibility problem as
    \begin{align}
        |b^*(\xb) - b^*(\tilde{\xb})| &\leq \mathcal{L}_b \lVert \bm{x} - \tilde{\xb} \rVert_2,~\hspace{0.5cm}\text{with}\nonumber\\
        \mathcal{L}_b &= \mathcal{L}_{nf} + \mathcal{L}_{LP} \cdot \mathcal{L}_{ng}\label{eq:LipschitzConstants}
    \end{align}
    where we use the Lipschitz constant of a box constrained LP derived in Prop.~\ref{lem:Lipschitz}. Consequently, instead of checking the feasibility problem on a continuous domain, we can evaluate a single LP as in Eq.~\eqref{eq:LPrewritten} at a state $\bar{\xb}$ and ensure that $b^*(\xb)\geq0$ in a neighborhood around $\bar{\xb}$.
    \begin{example}
        (Example Cont.) We consider the Lipschitz constants for one curve segment defined by four control points and $\tau_1 \leq \tau\leq \tau_2$. To obtain $\mathcal{L}_b$ in Eq.~\eqref{eq:LipschitzConstants}, we need the Lipschitz constant of the normal $\bm{n}(\bm{x}(\tau))$ along a Catmull-Rom curve which is equivalent to the Lipschitz constant of the tangent vector $\bm{t} = \bm{x}'(\tau)$ since $\bm{n}\perp \bm{t}$. Thus, since $\bm{x}'(\tau)$ is a second order polynomial the Lipschitz constant of $\bm{n}$ is
        \begin{align*}
            \mathcal{L}_n = \mathrm{max}(|\bm{x}''(\tau_1)|, |\bm{x}''(\tau_2)|),
        \end{align*}
        as $\bm{x}''(\tau)$ is linear in $\tau$. It follows directly that $\mathcal{L}_{ng} = \mathcal{L}_{n}$ because $\bm{g}(\xb)= [0~1]^T$. For $\bm{n}(\xb)^T\bm{f}(\xb) = n_1(\tau) \cdot x_2(\tau)$, we obtain an upper bound on the Lipschitz constant as
        \begin{align*}
            \mathcal{L}_{nf} \leq M_x \cdot \mathcal{L}_n + M_n \cdot \mathcal{L}_{x_2},
        \end{align*}
        with $\mathcal{L}_{x_2}$ being the Lipschitz constant of a third order polynomial, $M_x = \mathrm{sup}_{\tau} |\bm{x}_2(\tau)|$ can be obtained from the bounding box shown in Sec.~\ref{sec:catmullrom} and and $M_n = \mathrm{sup}_{\tau} \lVert \bm{n} \rVert_2= 1$ as the normal is always normalized. Finally, by Eq.~\eqref{eq:LipschitzConstants}, we obtain the Lipschitz constant of $b^*(\xb)$ as $\mathcal{L}_b(\xb(\tau)) = \mathcal{L}_{nf} + \mathcal{L}_{LP}\cdot \mathcal{L}_{ng}$.
    \end{example}
    \begin{theorem}
        The set $\Omega$ with boundary $\partial \Omega$ defined by $\bm{s}_{\Omega}$ in Eq.~\eqref{eq:safeset_state} and a $C^1$ manifold passing through all control points is a control invariant set if 
        \begin{align*}
            \partial \Omega &\subset \bigcup_{i=1}^N \mathcal{B}_{r_i}(\bm{x}_i), ~\text{and}\\
            b^*(\bm{x}_i) &\geq \mathcal{L}_b \cdot r_i\geq 0, \hspace{0.5cm} \forall i=1,\dots, N
        \end{align*}
        with a Lipschitz constant defined in Eq.~\eqref{eq:LipschitzConstants} and $\mathcal{B}_r(\xb)$ denoting a ball of radius $r\geq 0$ located at $\bm{x}$.\label{thm:main}
    \end{theorem}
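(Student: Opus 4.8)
The plan is to reduce the claim to a pointwise positivity statement on the boundary and then discharge that statement using the covering hypothesis together with the Lipschitz bound from Eq.~\eqref{eq:LipschitzConstants}. By Nagumo's theorem and the equivalence $b^*(\bar{\xb}) \geq 0 \Leftrightarrow \exists \ub \in \mathcal{U}: \bm{n}(\bar{\xb})^T\bm{F}(\bar{\xb},\ub)\geq 0$ established just before Eq.~\eqref{eq:contCBFConstraint}, it suffices to show that $b^*(\xb)\geq 0$ holds for \emph{every} $\xb \in \partial\Omega$, not merely at the control points $\bm{x}_i$. Thus the whole argument comes down to propagating the finitely many pointwise certificates $b^*(\bm{x}_i)\geq \mathcal{L}_b r_i$ to the entire continuous boundary.

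First I would fix an arbitrary $\xb \in \partial\Omega$. By the covering hypothesis $\partial\Omega \subset \bigcup_{i=1}^N \mathcal{B}_{r_i}(\bm{x}_i)$, there exists an index $i$ with $\xb \in \mathcal{B}_{r_i}(\bm{x}_i)$, i.e. $\lVert \xb - \bm{x}_i\rVert_2 \leq r_i$. Since both $\xb$ and $\bm{x}_i$ lie on the boundary, the Lipschitz estimate \eqref{eq:LipschitzConstants} applies to this pair and gives $|b^*(\xb) - b^*(\bm{x}_i)| \leq \mathcal{L}_b \lVert \xb - \bm{x}_i\rVert_2 \leq \mathcal{L}_b r_i$. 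Combining this with the hypothesis $b^*(\bm{x}_i) \geq \mathcal{L}_b r_i$ yields
\begin{align*}
    b^*(\xb) \geq b^*(\bm{x}_i) - \mathcal{L}_b r_i \geq \mathcal{L}_b r_i - \mathcal{L}_b r_i = 0.
\end{align*}
As $\xb$ was arbitrary, $b^*(\xb)\geq 0$ on all of $\partial\Omega$, and the equivalence above certifies that condition \eqref{eq:existence} holds, so $\Omega$ is control invariant.

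The main obstacle is not the covering chain itself, which is a routine triangle-inequality argument, but ensuring that the single constant $\mathcal{L}_b$ from Eq.~\eqref{eq:LipschitzConstants} genuinely bounds the variation of $b^*$ between \emph{any} two boundary points sharing a common ball. Two points require care. First, $b^*$ is the optimal value of the $\xb$-parameterized LP \eqref{eq:LPrewritten}, so its Lipschitz dependence on $\xb$ is inherited through the cost vector $\bm{c}(\xb)=\bm{n}(\xb)^T\bm{g}(\xb)$ via Prop.~\ref{lem:Lipschitz} and through the affine term $\bm{n}(\xb)^T\bm{f}(\xb)$; one must verify that the composition of these Lipschitz maps indeed produces the additive bound $\mathcal{L}_b = \mathcal{L}_{nf} + \mathcal{L}_{LP}\cdot\mathcal{L}_{ng}$ uniformly over the region swept by the boundary, and here it is essential to use the \emph{global} Lipschitz property rather than the infinitesimal derivative, since $b^*$ is only almost-everywhere differentiable in $\bm{s}_{\Omega}$. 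Second, when $\partial\Omega$ is assembled from several Catmull--Rom segments, the per-segment constants must be aggregated; taking $\mathcal{L}_b$ as the maximum over segments, with each $r_i$ compatible with the corresponding segment's bounding box, keeps the statement valid with a single global constant. Once this bookkeeping is settled, the positivity propagation is immediate.
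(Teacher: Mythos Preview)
Your proposal is correct and follows essentially the same argument as the paper: pick an arbitrary boundary point, locate it in one of the covering balls, apply the Lipschitz bound~\eqref{eq:LipschitzConstants} to propagate the pointwise certificate $b^*(\bm{x}_i)\geq \mathcal{L}_b r_i$ to $b^*(\xb)\geq 0$, and conclude by Nagumo's theorem. Your additional remarks about the global Lipschitz constant and per-segment aggregation are useful caveats but not part of the paper's proof, which simply invokes~\eqref{eq:LipschitzConstants} as given.
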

    \begin{proof}
        Consider a single ball $\mathcal{B}_{r_i}(\bm{x}_i)$. If the conditions in Thm.~\ref{thm:main} hold, we know that $\forall \bm{x} \in \mathcal{B}_{r_i}(\bm{x}_i)$
        \begin{align*}
            |b^*(\xb) - b^*(\bm{x}_i)| &\leq \mathcal{L}_b \lVert \bm{x} - \bm{x}_i\rVert_2 \leq \mathcal{L}_b\cdot r_i\\
            \Rightarrow b^*(\xb) &\geq b^*(\bm{x}_i) - |b^*(\xb) - b^*(\bm{x}_i)|\\
            &\geq \mathcal{L}_b \cdot r_i - \mathcal{L}_b \cdot r_i = 0
        \end{align*}
        implying that $ \exists \ub \in \mathcal{U}:\bm{n}(\xb)^T \bm{F}(\xb, \ub) \geq 0$. Since, by Thm.~\ref{thm:main}, $\partial \Omega$ is a subset of the union of all balls $\mathcal{B}_{r_i}(\bm{x}_i)$ we know that for any point on $\partial\Omega$ there exists a flow into the set $\Omega$ which by Nagumo's theorem implies that $\Omega$ is control invariant.
    \end{proof}
    To synthesize virtual control inputs for safe expansion of the invariant set, we therefore solve the following quadratic program (QP) with finitely many constraints
    \begin{equation}
        \begin{aligned}
            \bm{\eta}^s(\bm{s}_{\Omega}) = \argmin_{\bm{\eta} \in \mathbb{R}^{n\cdot N}} \quad & \left(\bm{\eta} - \bm{\eta}^{\mathrm{ref}}\right)^T {\bm{Q}} \left(\bm{\eta} - \bm{\eta}^{\mathrm{ref}}\right)\\
            \textrm{s.t.~~} \quad & \frac{\partial \tilde{b}^*(\bm{x}_i)}{\partial \bm{s}_{\Omega}} \bm{\eta} \geq - \alpha(\tilde{b}^*(\bm{x}_i)), i=1,\dots,N
        \end{aligned}
        \label{eq:QPFinal}
    \end{equation}
    where we emphasize that the additional arguments in $\tilde{b}^*(\bm{x}_i) := b^*(\bm{x}_i) - \mathcal{L}(\bm{s}_{\Omega}) \cdot r_i(\bm{s}_{\Omega})$ depend on the state $\bm{s}_{\Omega}$ and have to be taken into account when calculating gradients.
    \begin{corollary}
        Given an initial control invariant set defined by $\bm{s}_{\Omega}(t_0)$ and the dynamics in Eq.~\eqref{eq:PolytopeSystem}, if the QP in Eq.~\eqref{eq:QPFinal} is feasible at all times, then the virtual control input $\bm{\eta}^s$ ensures that $\Omega$ is control invariant $\forall t \geq t_0$.
    \end{corollary}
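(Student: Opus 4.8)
The plan is to read this Corollary as a standard Control Barrier Function (CBF) forward-invariance statement, but for the \emph{outer} single-integrator system~\eqref{eq:PolytopeSystem} rather than the original plant~\eqref{eq:deterministic}. Concretely, I would reduce the claim to a single assertion: that the $N$ tightened values $\tilde{b}^*(\bm{x}_i) = b^*(\bm{x}_i) - \mathcal{L}_b\, r_i$ all remain nonnegative along the trajectory $\bm{s}_{\Omega}(t)$ generated by $\bm{\eta}^s$. Once this is shown, both hypotheses of Theorem~\ref{thm:main} hold at every $t \geq t_0$: the covering condition $\partial\Omega \subset \bigcup_{i} \mathcal{B}_{r_i}(\bm{x}_i)$ is guaranteed by construction, since the radii $r_i(\bm{s}_{\Omega})$ are defined from the segment bounding boxes (Sec.~\ref{sec:catmullrom}) so that the balls always cover the boundary, and $\tilde{b}^*(\bm{x}_i)\geq 0$ is exactly $b^*(\bm{x}_i)\geq \mathcal{L}_b\, r_i \geq 0$. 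Theorem~\ref{thm:main} then certifies control invariance of $\Omega$ at each instant $t$, which is precisely the Corollary's conclusion.

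To establish nonnegativity of each $\tilde{b}^*(\bm{x}_i)$, I would run the usual CBF comparison argument underlying Definition~\ref{def:CBF}. Feasibility of the QP~\eqref{eq:QPFinal} at all times is exactly the statement that an admissible $\bm{\eta}^s$ exists with $\frac{\partial \tilde{b}^*(\bm{x}_i)}{\partial \bm{s}_{\Omega}}\,\bm{\eta}^s \geq -\alpha(\tilde{b}^*(\bm{x}_i))$ for every $i$. Because the set dynamics are the single integrator $\dot{\bm{s}}_{\Omega} = \bm{\eta}^s$ and $\tilde{b}^*(\bm{x}_i)$ depends on time only through $\bm{s}_{\Omega}$, the chain rule gives $\ddt \tilde{b}^*(\bm{x}_i) = \frac{\partial \tilde{b}^*(\bm{x}_i)}{\partial \bm{s}_{\Omega}}\,\bm{\eta}^s \geq -\alpha(\tilde{b}^*(\bm{x}_i))$. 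With $\alpha$ class-$\kappa$ and the initial value $\tilde{b}^*(\bm{x}_i)(t_0)\geq 0$ — which is the certification of the initial set via Theorem~\ref{thm:main} — the comparison lemma forces $\tilde{b}^*(\bm{x}_i)(t)\geq 0$ for all $t\geq t_0$ and all $i$, closing the reduction.

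The main obstacle is regularity. The value $b^*$ is the optimum of a box-constrained LP and is only piecewise differentiable: as noted in Sec.~\ref{sec:LP}, $\nicefrac{\partial \ub^*}{\partial \bm{s}_{\Omega}}=\bm{0}$ holds only \emph{almost everywhere}, so the naive chain-rule computation of $\ddt \tilde{b}^*$ is valid off the measure-zero set where the LP optimizer switches between faces of the feasible box. To make the comparison argument rigorous I would lean on the Lipschitz continuity of $b^*$ in $\bm{s}_{\Omega}$ furnished by Proposition~\ref{lem:Lipschitz} and Eq.~\eqref{eq:LipschitzConstants}: each $\tilde{b}^*(\bm{x}_i)$ is then locally Lipschitz, hence absolutely continuous along $\bm{s}_{\Omega}(t)$, so its time derivative exists a.e. and the invariance conclusion can be obtained with Dini derivatives or a nonsmooth Nagumo-type formulation, surviving the kinks. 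A secondary check is that the dependence of $\tilde{b}^*$ on $\bm{s}_{\Omega}$ through both $b^*(\bm{x}_i)$ and the product $\mathcal{L}(\bm{s}_{\Omega})\, r_i(\bm{s}_{\Omega})$ is differentiated consistently in the QP constraint~\eqref{eq:QPFinal}, as the paper stresses; otherwise the enforced inequality would not coincide with the true $\ddt \tilde{b}^*$ and the forward-invariance step would break.
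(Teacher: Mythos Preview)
Your proposal is correct and matches the paper's intended (though unstated) argument: the Corollary is left without explicit proof in the paper, being a direct consequence of the CBF forward-invariance result \cite{ames2016control} applied to the single-integrator set dynamics~\eqref{eq:PolytopeSystem} together with Theorem~\ref{thm:main}. Your discussion of the regularity issue---that $b^*$ is only piecewise differentiable and the comparison argument must be carried out with Lipschitz/Dini-derivative machinery---is a legitimate technical refinement that the paper glosses over with the ``almost everywhere'' remark in Sec.~\ref{sec:LP}.
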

    \begin{remark}
        Note that, we do not rely on the expansion CBF candidate being a valid CBF since we can always check if the constraints in Eq.~\eqref{eq:QPFinal} hold. In case the QP becomes infeasible, or any $\tilde{b}^*<0$, we can simply use the last obtained solution as a control invariant set.
    \end{remark}
    We use the QP in Eq.~\eqref{eq:QPFinal} in a forward simulation of the closed loop dynamical system~\eqref{eq:PolytopeSystem}, i.e. $\dot{\bm{s}}_{\Omega} = \bm{\eta}^s({\bm{s}}_{\Omega})$. At any point in time, we can use the set state ${\bm{s}}_{\Omega}$ to construct a CBF for the original system~\eqref{eq:deterministic} which we describe in the following section.
    \subsection{CBF Candidates From Control Invariant Set}
    \label{sec:CBFQP}
    So far, we have only considered the boundary of a control invariant set. This boundary could be used to define a switching controller for the system~\eqref{eq:deterministic} that if $\bm{x}(t) \in \partial \Omega$ applies the solution $\ub^*$ of Eq~\eqref{eq:LPfeasibility}. However, this can result in undesired chattering behavior \cite{choi2021robust}. To address this, we design a CBF candidate from the boundary $\partial \Omega$ which results in a smoother control signal. Similar to \cite{singletary2022safety}, we propose to use the signed distance function (SDF) as a CBF candidate for the system \eqref{eq:deterministic}. The signed distance between a state and the boundary $\mathrm{sd}(\xb,\partial\Omega)$
    can be obtained as
    \begin{align*}
    \mathrm{dist}(\xb, \partial\Omega) &:= \min \left\{ \|\bm{d}\|_2 : (\xb + \bm{d}) \in \mathcal{X} \backslash \Omega \right\}, \\
    \mathrm{pen}(\xb, \partial\Omega) &:= \min \left\{ \|\bm{d}\|_2 : (\xb + \bm{d}) \notin \mathcal{X} \backslash \Omega \right\}, \\
    \mathrm{sd}(\xb, \partial\Omega) &:= \mathrm{dist}(\xb, \partial\Omega) - \mathrm{pen}(\xb, \partial\Omega),
    \end{align*}
    which we use as a candidate function \( h(\bm{x}) = \mathrm{sd}(\xb, \partial\Omega) \), taking zero values on the boundary $\partial \Omega$.
    
   It is important to note that we cannot guarantee \(h(\bm{x})\) to be a valid CBF by Def.~\ref{def:CBF} since we only ensure a positive flow into the set $\Omega$ on the boundary. To address this, we introduce a slack variable \(\delta\) into the final CBF-QP formulation 
   \begin{equation}
        \begin{aligned}
        \bm{u}^* = \arg & \min_{\bm{u} \in \mathcal{U}, \, \delta}   (\bm{u} - \bm{u}^\text{ref})^T (\bm{u} - \bm{u}^\text{ref}) + k_s \delta^2  \\
        \text{s.t.} \quad &\frac{\partial h}{\partial \bm{x}} \left( \bm{f}\left(\bm{x}\right) + \bm{g}\left(\bm{x}\right) \bm{u} \right) \geq -\alpha( h(\bm{x})) - \delta, \\
        & \delta \geq 0
        \end{aligned}
    \label{eq:cbf-qp}
    \end{equation}
   rendering the QP always feasible for any reference control input $\ub^{\mathrm{ref}}$ and $k_s > 0$. However, this does not affect the derived control invariance of $\Omega$ since the gradient of $h\ofx$ aligns with the normal $\bm{n}\ofx$ on the zero levelset of $h$. Consequently, it is ensured that there exists a control input according to Eq.~\eqref{eq:existence}. Therefore, the slack variable will always be zero on the boundary.
   This requirement is justified by the fact that \(h(\bm{x})\) is continuously differentiable at the boundary due to the assumption of a $C^1$ manifold. 

\begin{figure}[t!]
    \centering
    \includegraphics[scale=0.7]{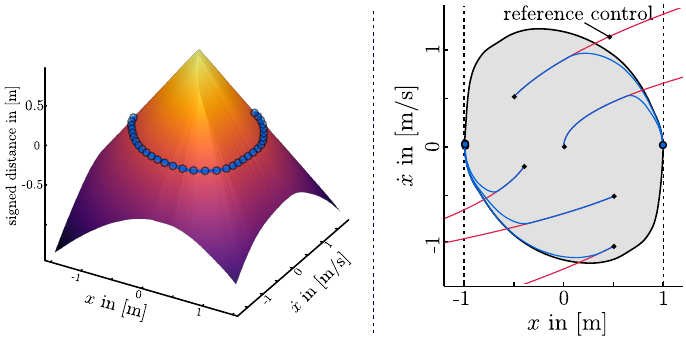} 
    \caption{Simulation results for the running example of a double integrator. \textbf{Left:} SDF constructed numerically from the boundary $\partial \Omega$. Blue dots indicate the position of control points in the phase space. \textbf{Right:} Illustration of the invariant set $\Omega$ with sampled state trajectories shown in blue. If we do not apply the proposed safety filter, the reference controller (red) drives the state into the unsafe area.}
    \label{fig:SDFPlot}
    \vspace{-0.5cm}
    \end{figure}

    \section{Results}
    In this section, we demonstrate how our proposed method can be used to find valid control invariant sets and show how it can be directly used to define a CBF-QP controller. We finalize our running example and also show results for a nonlinear system, namely the inverted pendulum. We implemented our approach in the Julia programming language and use a 4th order Runge-Kutta method to integrate the set dynamics in Eq.~\eqref{eq:PolytopeSystem}.
    
    \subsection{Running Example}
    We define an initial set as a small Ball around the equilibrium state $\xb_{\mathrm{eq}}=[0~0]^T$ since an equilibrium always provides a control invariant set for any nonlinear system. To ensure that the Catmull-Rom curve is fully contained within the union of balls as required in Thm.~\ref{thm:main}, we enforce the constraints on the midpoints of a curve segment and set the radius to $r_i = \tfrac{1}{2}(\tau_2 - \tau_1)$ (see Fig.~\ref{fig:CatmullRom}). Figure~\ref{fig:FirstPage} shows the time evolution of three control invariant sets for $N\in\{10, 20, 50\}$ control points. As one would expect, we can see that the size of the converged set increases as we increase the number of control points. This is due to the radius $r$ decreasing when more Catmull-Rom segments are considered. Since we enforce $b^*(\xb) \geq \mathcal{L}_b(\xb) \cdot r(\xb)$, it is expected that $\Omega$ approaches the maximum invariant set as $N \rightarrow \infty$. Nonetheless, we want to emphasize that all three sets in Fig.~\ref{fig:FirstPage} are control invariant sets, verified on a continuous domain.

    We construct an SDF and obtain a CBF-QP based controller as described in Sec.~\ref{sec:CBFQP}. Figure \ref{fig:SDFPlot} shows the invariant set for $N=50$ control points as well as the constructed SDF. It can clearly be seen that the SDF has non-smooth areas and is, thus, not continuously differentiable. However, its gradients on its zero levelset align with the normals of $\partial \Omega$. For simulation purposes, we sample random initial conditions $\xb_0 \in \Omega$ and reference controls $u^{\mathrm{ref}}$ as shown in Fig.~\ref{fig:SDFPlot}. We can observe that the safe control input synthesized by Eq.~\eqref{eq:cbf-qp} minimally corrects the reference controller while keeping the state within $\Omega$. 
    \begin{figure}[t!]
    \centering
    \includegraphics[scale=1]{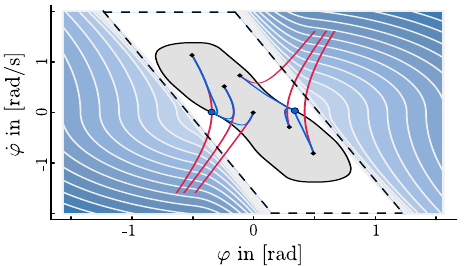} 
    \caption{Simulation results for the inverted pendulum system. The value function calculated using HJ reachability is shown by the blue contours and its zero levelset (maximum control invariant set) is shown by the dashed lines. Our obtained control invariant set is shown in grey. The colored trajectories illustrate forward simulations of the dynamical system.}
    \label{fig:InvPendPlot}
    \vspace{-0.5cm}
    \end{figure}
    \subsection{Inverted Pendulum}
    Next, we find a control invariant set for the inverted pendulum system with its nonlinear dynamics given by
    \begin{align*}
        m \ell^2 \ddot{\varphi} = m g \ell ~\mathrm{sin} (\varphi) + u ,
    \end{align*}
    where $m, \ell$ and $g$ denote the mass, length and gravity, respectively. Again, we assume that we have an actuation constraint on the torque $|u|\leq u_{\mathrm{max}}$ and initialize a small set around the equilibrium state. We define the physical safe set as $\mathcal{S} = \{\varphi, \dot{\varphi} \in \mathbb{R}\mid -\tfrac{\pi}{2} \leq \varphi \leq \tfrac{\pi}{2}, -2 \leq \dot{\varphi}\leq 2\}$ and obtain the maximum control invariant set using HJ reachability on a discretized state space. Figure~\ref{fig:InvPendPlot} shows the contours of the value function as well as its zero level set through the dashed lines. We use $N=50$ control points to find the control invariant set depicted in Fig.~\ref{fig:InvPendPlot}. It can be observed that we do not approach the maximum control invariant which is due to the conservatism induced by Lipschitz constants. Also, we want to highlight that the inverted pendulum system has a theoretically unbounded maximum control invariant set which can currently not be handled by our proposed method. Therefore, we use the set $\mathcal{S}$ with additional constraints on the angular velocity.

    \section{Conclusion and Discussion}
    This paper presents a method for finding a valid control invariant set by expanding the boundary of an initial valid subset. This expansion is formulated as another invariance problem in the space of possible boundaries. As a result, we guarantee control invariance of the expanded set at any point in time which at convergence can be used to obtain a maximized invariant set. The key to verification of the invariant set on a continuous domain is based on leveraging Lipschitz constants of linear programs which allows us to transform the verification problem into a finite number of LPs. Finally, we construct a minimally invasive safety filter in a CBF-QP framework. Numerical experiments demonstrate the effectiveness of the proposed method in both linear and nonlinear system dynamics. 

    Although the theoretical results derived in this work hold for arbitrary state dimensions, the assumption of having a $C^1$ manifold passing through all control points is quite limiting. This is why we have focused on toy examples in 2D where the boundary can be directly parameterized using Catmull-Rom curves. Furthermore, when expanding the invariant set, we have to solve a QP with decision variables of dimension $n\cdot N$, i.e. the number of control points times the state dimension. Although QPs can be solved efficiently using, e.g., recently proposed GPU-accelerated QP solvers \cite{bishop2024relu}, we also suffer from the curse of dimensionality. However, we believe there is plenty of space for future contributions.

    In future work, we aim to extend the proposed approach to higher dimensions either by representing the boundary with manifold splines \cite{gu2005manifold} or orthogonal neural networks \cite{prach2022almost} as these offer known Lipschitz constants. Further interesting directions include modeling the joint dynamical system between the physical state $\bm{x}$ and the set state $\bm{s}_{\Omega}$ which would allow for online expansion of invariant set based on, e.g., sensor observations. Lastly, we believe that Lipschitz constants of LPs can have a broader impact on verification of neural CBFs by accelerating the verification process.
    

    
    

    \bibliographystyle{IEEEtran}
    \bibliography{refs.bib}

\end{document}